\renewcommand\footnotetextcopyrightpermission[1]{}
\newcommand{\e}{\epsilon}
\newcommand{\Enc}{\textsc{Enc}}
\newcommand{\Dec}{\textsc{Dec}}
\newcommand{\LSB}{\mathrm{LSB}}
\newcommand{\MSB}{\mathrm{MSB}}
\newcommand{\Encode}{\textsc{Encode}}
\newcommand{\Decode}{\textsc{Decode}}
\newcommand{\Gen}{\textsc{Gen}}
\newcommand{\id}{\mathrm{id}}
\newcommand{\Cert}{\mathrm{Cert}}
\newcommand{\usr}{\mathrm{usr}}
\newcommand{\pwd}{\mathrm{pwd}}
\newcommand{\info}{\mathrm{info}}
\newcommand{\xtime}{\mathrm{time}}
\newcommand{\nonce}{\mathrm{nonce}}
\newcommand{\websites}{\mathrm{websites}}
\newcommand{\sqn}{\mathrm{sqn}}
\newcommand{\MAC}{\textsc{MAC}}
\newcommand{\Vrfy}{\textsc{VrfyMAC}}
\newcommand{\MACtag}{\mathrm{tag}}
\newcommand{\PRG}{\textsc{PRG}}
\newcommand{\Sign}{\textsc{Sign}}
\newcommand{\cale}{\mathcal{E}}
\newcommand{\F}{\mathbb{F}}
\newcommand{\IndistExp}{\mathrm{ExpIndDistress}^{f_{N}}_\adv}
\newcommand{\IndDistress}{\mathrm{IndDistress}}
\newcommand{\Adv}{\mathrm{Adv}}
\newcommand{\randomCPA}{\textrm{IND\$-CPA}}
\newcommand{\IndAdv}{\Adv_\adv^{\IndDistress,f_N}}
\newcommand{\advloc}{\adv_{\textsc{loc}}}
\newcommand{\advnet}{\adv_{\textsc{net}}}
\newcommand{\advweb}{\adv_{\textsc{web}}}
\newcommand{\DCP}{\textrm{DCP}}
\newtheorem{guarantee}{Guarantee}
\newtheorem{prop}{Proposition}
\begin{document}

\title{Ask for Alice: Online Covert Distress Signal in the Presence of a Strong Adversary}


\author{Hayyu Imanda}
\affiliation{
	\institution{University of Oxford}
	\city{Oxford}
	\country{United Kingdom}
}

\author{Kasper Rasmussen}
\affiliation{%
 	 \institution{University of Oxford}
  	\city{Oxford}
	\country{United Kingdom}
}
%
%
%
%
%


\begin{abstract}

  In this paper we propose a protocol that can be used to covertly send a distress signal through a seemingly normal webserver, even if the adversary is monitoring both the network and the user's device. This allows a user to call for help even when they are in the same physical space as their adversaries. We model such a scenario by introducing a strong adversary model that captures a high degree of access to the user's device and full control over the network. 
Our model fits into scenarios where a user is under surveillance and wishes to inform a trusted party of the situation. To do this, our method uses existing websites to act as intermediaries between the user and a trusted backend; this enables the user to initiate the distress signal without arousing suspicion, even while being actively monitored. We accomplish this by utilising the TLS handshake to convey additional information; this means that any website wishing to participate can do so with minimal effort and anyone monitoring the traffic will just see common TLS  connections. In order for websites to be willing to host such a functionality the protocol must coexist gracefully with users who use normal TLS and the computational overhead must be minimal. We provide a full security analysis of the architecture and prove that the adversary cannot distinguish between a set of communications which contains a distress call and a normal communication.

\end{abstract}




\maketitle

\section{Introduction}

In the United Kingdom, the \emph{Ask for Angela} campaign~\cite{angela} allows any individual who feels unsafe an exit from the establishment they are in by asking for a fictional member named Angela, prompting a trained member of staff to ensure a safe passage and potentially call the authorities. Even more recently, major pharmacies across the UK have taken part in \emph{Ask for Ani} (Action Needed Immediately)~\cite{ani}, where the pharmacy would provide a consultation room for people experiencing domestic abuse. These campaigns are aimed to allow survivors of domestic abuse to systematically be able to inform someone of their situation in a discreet manner, in a physical space that is common for them to visit.

During the Covid-19 pandemic, there has been an increase in demand for survivors of domestic abuse services, which indicate an increase in the severity of abuse being perpetrated while survivors were not able to leave home~\cite{domesticabusecorona}; EU Member States experienced a similar trend~\cite{Mahasem1872}. While some regions in the US have reported that the number of calls received by domestic violence hotlines have dropped, experts believe that this is not a reflection of a decrease in the number of cases but rather that survivors were unable to safely connect with services while they are in the same space as their abusers~\cite{doi:10.1056/NEJMp2024046}. A solution is needed where survivors can safely reach out from their own homes, and this is a difficult task.

In this paper, we present an adversary model where the adversary has a high degree of control over a user's device and complete control over the local network. Our model allows an abstraction of real-life scenarios, where an adversary has visual access to the user's device while the user wishing to request help---this can be through sharing a physical space or the adversary's ability to monitor the screen through a camera. In addition to the above, this may include travellers who have been required to enter their authentication information at a hostile nation's border~\cite{noauthor_chinese_nodate}, or an intelligence agent who has to covertly inform of their capture while being in full surveillance by their captors~\cite{funkspiel}.


In certain situations, if communication dedicated to signalling distress is detected, it can lead to further consequences from the adversary. Hence, it is important that the system hidesthe fact that the user is signalling distress. We accomplish this using several different techniques: firstly, we rely on (existing) webservers to enrol in the scheme so they can act as intermediaries in communication between the user and the backend---this goes a long way towards making the connection seem innocent even if the device is under observation; secondly, to make sure the distress signal is undetectable on the network, we embed the distress signal by utilising the random nonce in the TLS handshake as a covert channel. This ensures that a signal can be sent without the adversary's knowledge while not compromising the underlying TLS session. We formalise our security notion using a security game, and with the formal definition of security and the adversary model we are able to provide a thorough proof of the security of our protocols. We summarise our contributions as follows:
\begin{itemize}
\item We introduce a local adversary $\advloc$, who in addition to being a Dolev-Yao adversary on the network, also has visual access to the user's device screen as well as all the user's TLS application layer data.
\item We create an infrastructure for distress reporting, where one central entity receives distress signals along with accompanying information, as well as webservers who voluntarily participate acting as entry points.
\item Within the distress infrastructure, we introduce enrolment protocols for the user and the webserver, as well as the main protocol with which the distress signal is relayed by the webserver to the central entity.
\item We introduce encoding and decoding functions to hide the distress and necessary information while being sent to the webserver with the presence of $\advloc$. We piggyback on the TLS handshake to send the output of the encoding function, and we provide a security analysis to show that $\advloc$ will not be able to distinguish between a distribution of nonces containing a distress signal and that of normal communications.
\end{itemize}

The fundamental goal of our scheme is for the user to signal distress to an online entity, where enrolment is possible beforehand. There are many non-technical facets of our scheme that are out of scope for this paper, including the resolution of the situation after the distress is received. Further, our scheme discusses the protocols that occur behind the scenes of the user initiating the distress; the human interaction with the device needs to be designed in a proper, user-friendly manner. These issues are incredibly important but are considered future work, however we highlight some possibilities in Section~\ref{sec:discussion}.

\section{Design Goals}

One trivial attempt to achieve the user's goal of communicating a distress signal is for the user to send an encrypted message to a trusted third party (e.g. a police website). However, the adversary, having full control of the network, is trivially aware of such attempt by the user. However, indeed for a very specific use-case, one webserver can be set up for the specific purpose of receiving a distress signal (i.e. the user communicates directly to the trusted third party)---this is possible only if the adversary isn't aware of the purpose of such website.

A second attempt may gain inspiration from existing schemes: in UK pharmacies participating in the \emph{Ask for Ani} scheme, a domestic abuse survivor may enter the establishment, even with the presence of the abuser, and ask for \emph{Ani} to a staff member who will give them access to a safe space and contact the relevant authorities. In network communications, this translates to the user signalling distress by sending messages that may look unsuspicious, but has been agreed upon with a trusted third party that this is a cry for help---e.g., uploading a specified text on social media. Even without the adversary knowing the exact text, this scheme does not give enough guarantees; codeword or codephrase that appears to be out-of-place or irrelevant to ongoing events may trigger suspicion~\cite{6890919}. 

In the online case, the adversary may perform statistical tests to detect when a particular communication is sent. Other methods, including hidden information on social media (for example, through hiding data on a text's white space~\cite{SecreTwit,DataHidingSteganography}, or hiding messages in images~\cite{958299}) can still be distinguished by the adversary. If this scheme is to be deployed with the target of reaching as many potential users that may benefit from this, we need to assume that the system we design is public, and the adversary only lacks access to several secret components, also known as Kerckhoff's principle. 

Our approach shares some similarities with \emph{Ask for Ani}, namely that we depend on participation of existing establishments to be the primary receiver of the distress signal, before relaying them. In particular, we believe well-known websites to be well placed to be this intermediary: a wide enough adoption would minimise suspicion based on the recipient of communication, especially as the user can choose these participating websites.

The websites will have to undergo an enrolment process, and this enrolment may be open on a rolling basis. The participation from websites in the scheme is necessary, so we need to ensure that the overhead for the participating sites is as small as possible. With each enrolled website, we utilise the TLS handshake; TLS is a widely deployed system and we minimise the complexity of the scheme deployment on a large scale. In addition, we use public key encryption for the user to communicate with the webserver for scalability. Our strategy is to encode this information into communication objects that are not immediately predictable in the underlying protocol, and we specifically choose the client nonce as it is a value that is chosen on the user side, as well as having a large enough size to include necessary information that can be understood and relayed by the participating websites. Its low predictability means that we can propose a way such that the adversary will not be able to determine whether or not a distress signal is sent with polynomially-bounded statistical tests. However, we note that our scheme can be adjusted to other objects with the relevant properties.

Our method is designed for the user to signal distress against an adversary that has a high degree of control of the network in which the user operates on, as well as having visual access to the user's device screen, which may be a result of being in close physical proximity of the user. Though this would cover most scenarios we have presented, we also cover a wider, stronger set of adversaries who are able to see application layer data of the user's TLS connections---for example, this can be done by setting up a TLS termination proxy.

\subsection{System Model}

\begin{figure}[tp]
\centering
\includegraphics[width=0.8\linewidth]{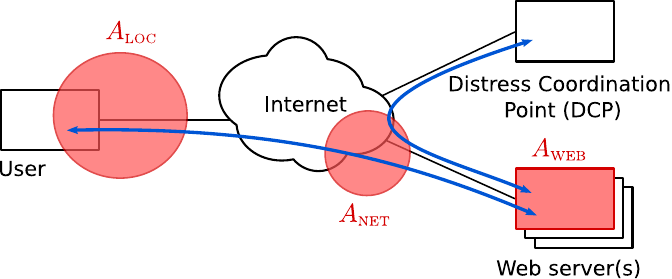}
\caption{System Model. The user communicates with a webserver to signal distress. The webserver in turn forwards the message to \DCP\ which initiates the required actions. $\advloc$ is assumed to be absent during user enrolment.}
\label{figure:systemmodel}
\end{figure}

The three principal players of the system are as follows, and are shown in Figure~\ref{figure:systemmodel}.

\begin{enumerate}
\item The \emph{user}: the player trying to signal distress to a trusted party.  The user's goal is to communicate a distress signal to the trusted party without the adversary's knowledge.
\item The \emph{webserver}: a webserver who relays the user's distress signal to a distress coordination point. To take part in the scheme, the webserver has to go through an enrolment process, and user chooses which webservers to send the distress signal to. We do not, however, assume that the webserver is always honest.
\item The \emph{distress coordination point} (\DCP): a trusted party who receives distress signals. We assume that \DCP\  is always honest.
\end{enumerate}



Any webserver may sign up to participate, though it is up to the user to decide the webserver to send a distress signal to if the choice is plenty (and this amount is up to the system designer, dependent on the use-case). Should there be many options, for the user, this decision can be made primarily based on two things: the trustworthiness of the website, and minimising suspicion from the adversary when visiting the site. 

The trustworthiness of the website can be judged in many ways, for example, the user might choose more popular sites over those with less reputation to lose. To minimise suspicion from the adversary, the user can choose websites that they regularly visit or a popular site they would have a legitimate reason to visit in any circumstance; again, this depends on the specific scenario: during a border check that requires the user to open their email or social media site, these sites may be chosen during enrolment.

In addition, we assume that the user, webserver, and \DCP\ are all able to make a TLS connection and we assume that the user can communicate freely with the webserver at all times. 

Our system operates in three steps:
\begin{enumerate}
\item the \emph{server enrolment}, where the webserver communicates with \DCP\ so the server is added to the \emph{server database}, as well as to facilitate the exchange of necessary encryption keys.
\item the \emph{user enrolment}, where the user communicates with \DCP\ so the user is added to the \emph{user database}, for the user to choose the webservers they trust, as well as the exchange of identifier and encryption keys.
\item the main \emph{distress signal protocol}, which the user uses to signal distress.
\end{enumerate}

Though these steps are separate and somewhat sequential, it is important to note that new webservers and new users can enrol at any time. In addition, webservers and users who have enrolled before are able to update any enrolment details at any time.

\subsection{Adversary Model}

We consider three different adversaries located in different parts of the system, each with different capabilities and goals. Firstly, we have the \emph{network adversary} $\advnet$, an external Dolev-Yao type adversary---that is, we assume that the adversary carries the message and make no further assumptions about confidentiality and integrity of the messages passed on the network. The goals of $\advnet$ is to violate message confidentiality and message integrity (for example, to understand which users are signalling distress, or falsely trigger distress on behalf of an honest user). We also consider a \emph{malicious webserver} $\advweb$ whose goals are similar to $\advnet$ but $\advweb$ acts as a proxy for the users messages and is expected to relay communication to the $\DCP$. 

Lastly we introduce $\advloc$, the \emph{local adversary}. $\advloc$ has control over all of the user's communication channels (so is a subset of $\advnet$, but in addition, has access to the application layer content of TLS and the user's screen. We only assume that the internal state of running applications, including the user's secret keys, can be kept secret, and that $\advloc$ is not present during the user enrolment process and will not have any access to past enrolment communication. This is a new and non-standard adversary model that captures the scenarios we described. The goal of $\advloc$ is to detect a distress signal amongst normal communications. We make this more precise in Section~\ref{section:definition}.


Note that it is possible for $\advloc$ to create their own webserver, i.e. $\advloc$ and $\advweb$ are the same entity. However, as the user chooses which webservers they deem trusworthy, we do not consider this case. 
In addition, $\advloc$ has also the capabilities of $\advweb$ but with additional resources. Henceforth, we consider $\advloc$, $\advnet$, and $\advweb$ as three independent entities. We do not consider correlation attacks with colluding adversaries, for example by traffic or timing analysis, nor do we consider attacks on availability by $\advloc$ or $\advweb$.

\section{Distress Indistinguishability} \label{section:definition}

In this section, we introduce a security definition to capture the situation in which we illustrate how $\advloc$ can use their capabilities in order to detect a distress signal. We first start with the following.

\begin{definition}
For $d \in \{\true, \false\}$, and $x$ from a finite domain, a function $f_N(d,x,k)$ is a fixed-length reversible function if:
\begin{enumerate}
\item $| f_{N}(d, x, k) | = N$
\item Except with negligible probability, there exists $g(y, k')$ such that 
\[
g(f(d,x,k), k') = \begin{cases}
d,x \ \ \text{if } d= \true \\
d \ \ \ \ \text{if } d=\false
\end{cases}.
\]
\end{enumerate}
\end{definition}

The definition above simply states that the scheme preserves a binary value $d$ as well as some additional information $x$ if $d= \true$ that is reversible by using some key $(k, k')$. In addition, such function has a fixed output size~$N$. Of course, the trivial mapping would satisfy the definition, but we also need it to satisfy some security: for our scheme to be secure, we require the function to not only preserve the distress, but \emph{hide} it amongst other connections. Given that the adversary has access to TLS connections, they may have recorded a polynomial amount of client nonces in normal communication when the user isn't intending to signal distress. We formalise this in the following security game.



\begin{definition}\label{def:distressdistinguish}

\noindent Let $f_N$ be a fixed-length reversible function, $x_i$ a finite string, and $pk$ a public key for a public key encryption scheme. Consider the following experiment $\IndistExp(n)$:

\begin{center}
\noindent\resizebox{.8\linewidth}{!}{
	\pseudocode[width=2.5cm]{
	\textbf{Challenger} \< \< \textbf{Adversary} \\[0.1\baselineskip][\hline]
	\< \sendmessageleft*[2cm]{x_1, \hdots, x_n} \< \\
	(pk, sk) \leftarrow \mathcal{K} \< \< \\
	b \leftarrow_R \{0,1\} \< \< \\
	\text{Construct } r_1, \hdots, r_n \\
	\< \sendmessageright*[2cm]{pk,  r_1, \hdots, r_n,} \< \\
	\< \< \text{Guess } b' \in \{0,1\} \\
	\< \sendmessageleft*[2cm]{b'} \< \\
	[0.1\baselineskip][\hline]
	}}
\end{center}

The challenger constructs $r_1, \hdots, r_n$ as follows:
\begin{enumerate}
\item if $b = 0$: $u_i \leftarrow f_N(\false, x_i, pk)$ for each $i=1, \hdots, n$,
\item if $b = 1$: $D$ consists of $n-1$ numbers $r_i \leftarrow f_N(\false, x_i, k)$ and for exactly one $j \in \{1, \hdots, n\}$, $r_j \leftarrow f_N(\true, x_j, pk)$
\end{enumerate}
Define the advantage of $\adv$ on $f_N$ to be:
\[
\IndAdv(n) = | \Pr [\adv \text{ outputs 1} | b = 0] - \Pr [\adv \text{ outputs 1} | b =1] |. 
\]

We say that a fixed-length reversible function $f_N(x, y)$ is $\IndDistress$ secure if for all probabilistic polynomial time adversary $\adv(n)$, there exists a negligible function $\e(n)$ such that ${\IndAdv(n) \leq \e(n).}$
\end{definition}

The experiment above describes the scenario where the adversary is able to receive a distribution of $n$ nonces, and to guess whether or not it contains a distress call, which accurately captures our scenario as the vast majority of TLS connections will be normal communications not signalling distress. When given any amount of nonces, we want to create a scheme where the adversary will not be able to correctly guess whether or not a distress signal exists within that distribution. Intuitively, we need an encryption function to preserve and keep the distress confidential; indeed, note that any fixed-length encryption function would satisfy this, however we do not require the function to return the whole plaintext when $d = \false$. Hence, we use encryption for part of but not all of $f_N$; we will proceed in constructing a fixed-length reversible function that is $\IndDistress$ secure in Section~\ref{section:algorithm}.

\section{Security Protocols}

In this section, we design the architecture where the user can covertly signal distress to \DCP\ via the webserver. Before the distress can be sent, both the user and webserver have to separately go through an enrolment process.

\subsection{Server Enrolment}

The server enrolment aims to include the webserver in the server database. This allows \DCP\ to store the webserver's public key, as well as to agree on a secret key between the webserver and \DCP. This protocol is shown in Figure~\ref{protocol:serverenrolment}.

\begin{figure}[t]
\centering
	\resizebox{.9\linewidth}{!}{\input{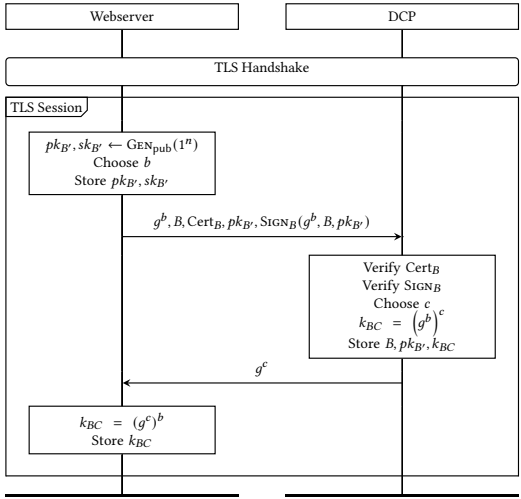}}
	\caption{\textbf{Server Enrolment.} This describes the enrolment process for a webserver to participate in the scheme. At the end of the server enrolment, the \DCP\ has added the webserver to the server database along with the webserver's public key, and the webserver and \DCP\  agreed on a secret, shared key.}
	\label{protocol:serverenrolment}
\end{figure}


The webserver connects to \DCP\ using TLS. The webserver generates a public and private key pair $(pk_{B'}, sk_{B'})$ for the encryption scheme discussed in Section~\ref{section:PKE}, and stores it locally. In addition, the webserver chooses a Diffie-Hellman exponent $b$. The webserver then calculates and sends $g^b$, along with their identity $B$, their certificate $\Cert_B$, their encryption public key $pk_{B'}$, and a signature $\Sign_B(g^b, B, pk_{B'})$ signed using the private key of the certificate $\Cert_B$. Note that $\Cert_B$ is a certificate for the public key of the webserver, not the newly generated encryption keys.

After receiving the message, \DCP\ verifies the certificate $\Cert_B$, and uses the certificate's public key to verify $\Sign_B$; by doing this, \DCP\ verifies that the webserver actually holds the private key of $\Cert_B$, proving authenticity. When all the verification steps have been passed, \DCP\ then chooses their Diffie-Hellman exponent $c$ and computes the Diffie-Hellman key $k_{BC} = (g^b)^c$. 
\DCP\ then stores the webserver's identity $B$, the freshly generated public key $pk_{B'}$, and the shared key $k_{BC}$ in the \emph{server database}, before sending $g^c$ to the server so the server can also compute $k_{BC}$. Here, $k_{BC}$ is the seed for a key derivation function which can be used to derive encryption and $\MAC$ keys used later on in the protocols; we use $k_{BC}$ to describe the keys derived from this seed.


\subsection{User Enrolment}

\begin{figure}[t]
\centering
	\resizebox{.9\linewidth}{!}{\input{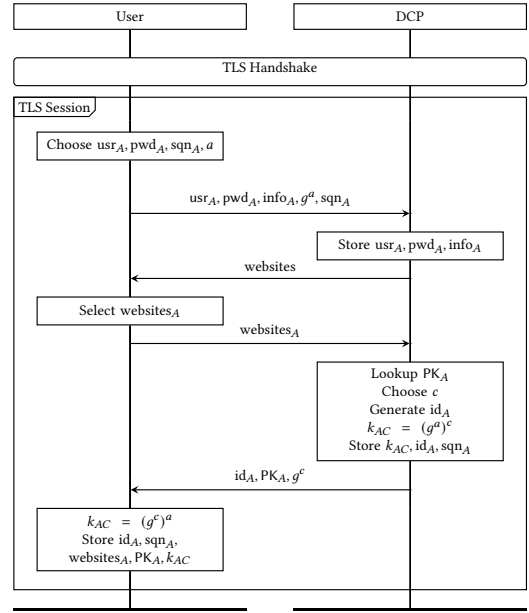}}
	\caption{\textbf{User Enrolment}. This describes the enrolment process for a user to participate in the scheme. At the end of the user enrolment, the \DCP\ has added the user to the user database, the user has obtained the list of webservers they trust along with the public key of each webserver, along with their identifier and sequence number. The user and \DCP\  agree on a shared secret key.}
	\label{protocol:userenrolment}
\end{figure}

The user enrolment process, occurring independently from the server enrolment, allows the user to be added to the user database and for both the user and \DCP\ to agree on necessary shared secrets to which the main protocol rely on: the user identifier, a secret sequence number, and a shared key between the user and \DCP. The user also obtains the list of chosen websites and their public keys. The protocol is shown in Figure~\ref{protocol:userenrolment}.

The user begins a TLS connection with \DCP. The user chooses a username $\usr_A$ and password $\pwd_A$ to register with (or to authenticate themselves if they have previously enrolled), as well as a sequence number $\sqn_A$ and a Diffie Hellman exponent $a$. The user sends $\usr_A$, $\pwd_A$, $\sqn_A$, $g^a$ to \DCP. In addition, the user sends their information $\info_A$, which is needed for \DCP\ to react appropriately in case of distress, as well as a preference on how the user can obtain confirmation, which we discuss in Section~\ref{sec:discussion}.

After receipt, \DCP\ proceeds to store $\usr_A, \pwd_A, \info_A$ in the \emph{user database} and sends back the list of participating webservers from the server database. From the list, the user chooses a subset $\websites_A$, and returns their choice to the \DCP. \DCP\ then chooses a Diffie-Hellman exponent $c$ and computes the shared key ${k_{AC} = (g^a)^c}$. In addition, \DCP\ generates $\id_A$, a unique random string of fixed size $m_{i}$ which acts as the user's identifier in the system, and stores $k_{AC}, \id_A, \sqn_A$ in the user database. Finally \DCP\ sends the user identifier $\id_A$, the list of public keys of websites in $\websites_A$, as well as $g^c$ to the user. This allows the user to compute $k_{AC} = (g^c)^a$, and stores $\id_A$, $\sqn_A$, the list of webservers $\websites_A$ and their public keys $\mathsf{PK}_A$, as well as $k_{AC}$.  Similarly as above, we use $k_{AC}$ as the encryption and MAC keys derived from a key derivation function.


\subsection{Distress Signal Protocol} \label{section:signaldistress}

\begin{figure*}[t]
\centering
	\resizebox{.95\linewidth}{!}{\input{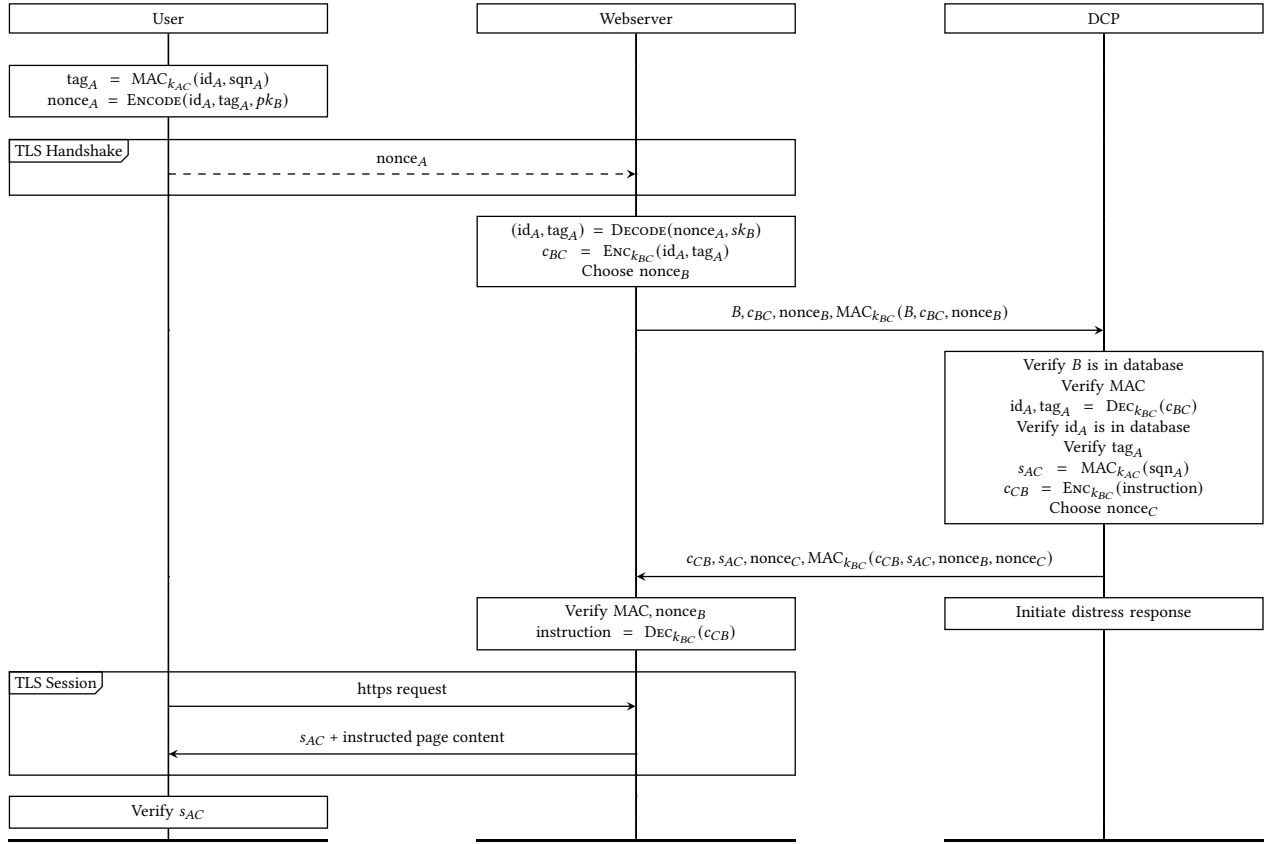}}
	\caption{\textbf{Distress Signal Protocol.} The user, when deciding to send a distress signal, will initiate this protocol and send the encoded distress through the client nonce of a TLS handshake with the webserver. The webserver then relays the user identifier and integrity tag to \DCP\ which will later initiate the agreed distress response. To ensure the user that the distress has been received by \DCP, a verification message is relayed by the webserver to the user.} \label{protocol:main}
\end{figure*}

After the server and user enrolments are complete, the user can now indicate distress at will, including in the presence of $\advloc$. To do so the user must connect to one of the webservers they chose during enrolment. In the following the user connects to webserver $B$ with corresponding encryption key $pk_{B'}$, and the protocol is shown in Figure~\ref{protocol:main}.

The user first computes $\MACtag_A = \MAC_{k_{AC}}(\id_A, \sqn_A)$ using the key shared with \DCP, which requires includes the user identifier and sequence number obtained during user enrolment. Note that when a sequence number is used in the main protocol it is incremented by one, regardless of whether the protocol is later aborted.

The user proceeds to create a \emph{distress nonce} using $\Encode$, a function we will describe in detail in Section~\ref{section:algorithm}. In short, $\Encode$ \emph{hides} the distress, along with $\id_A$ and $\MACtag_A$ within an $N$-bit distress nonce, using $pk_{B'}$. The user initiates a TLS handshake with the webserver and sends the distress nonce as the client nonce in Client Hello, and proceeds to completing the handshake.

Upon receiving a new TLS connection, the webserver unravels the client nonce to check if it has a structure that indicates a distress signal. This process is done by running the $\Decode$ algorithm, described in detail in Section~\ref{section:algorithm}. If $\Decode$ outputs $\mathrm{not \mathunderscore distress}$ then the protocol terminates; otherwise, the webserver obtains $\id_A$ and $\MACtag_A$ as the output of $\Decode$. Termination of the distress protocol does not affect the processing of the main TLS connection, so normal connections function as expected.

If a distress signal is identified, the webserver now encrypts $\id_A$ and $\MACtag_A$ with $k_{BC}$, a shared key that was generated during server enrolment, to obtain $c_{BC}$. The webserver then sends their identity $B$, the ciphertext $c_B$, a nonce $\nonce_B$, as well as $\MAC_{k_{BC}}(B$, $c_{BC}, \nonce_B)$, signed using the shared key $k_{BC}$. 

Upon receipt, \DCP\ checks that $B$ is indeed in the server database and finds the corresponding key $k_{BC}$ so that \DCP\ can verify the $\MAC$. \DCP\ then proceeds to decrypt $c_{BC}$ with $k_{BC}$ to obtain $\id_A, \MACtag_A$. Now, to verify $\MACtag_A$, \DCP\ would use the sequence number $\sqn_A$ agreed in user enrolment to verify $\MACtag_A$. To do this, for a specified $n_{max}$ and for $i = 0, \hdots, n_{max}$, \DCP\ computes $\Vrfy(\MACtag$, $k, (\id, \sqn+i))$ until $\Vrfy$ outputs accepts. Otherwise, the protocol aborts. Note that $n_{max}$ is a parameter that the system designer can specify to allow for malicious or accidental errors in previous distress signals not being received by \DCP. If the MAC verification is successful, \DCP\ updates $\sqn$ to $\sqn'+1$ where $\sqn'$ is the accepted sequence number.

Now, \DCP\ needs to inform the user that their distress has been received, which they do by sending a signed sequence number $s_{AC} = \MAC_{k_{AC}}(\sqn_A)$ to the user. \DCP\ provides instructions to the webserver on how to send $s_{AC}$ to the user, as specified in the user enrolment as part of $\info_A$; this is done by encrypting $\mathrm{instruction}$ with $k_{BC}$ to obtain $c_{CB} = \Enc_{k_{BC}}(\mathrm{instruction})$. \DCP\ then sends the signed $c_{CB}, s_{AC}, \nonce_B, \nonce_C$ to the webserver. After this, the agreed upon distress response may be initiated, as specified in $\info_A$.

Upon receipt, the webserver decrypts $c_{CB}$ to obtain the necessary instructions and verifies $\MAC$ using the shared key $k_{BC}$ using $\nonce_B$ to check that the message is fresh. After verification is successful, the webserver waits for a https request from the user, and sends $s_{AC}$ within the page content as specified in $\mathrm{instruction}$. The user can then verify \DCP's $\MAC$. The specific techniques used to embed the response in the page content can be important if the user is under observation. We discuss ways to do this in practice in Section~\ref{sec:discussion}.

\section{Algorithms}
\label{section:algorithm}

In order for the user to hide the distress, we \emph{encode} the distress into what looks like a TLS client nonce while still satisfying the security requirements of TLS. To do so, we construct a fixed-length reversible function with a specified $N = 256$, the size of the TLS client nonce in TLS 1.3. In addition to hiding the distress state, we are also embedding the user's identifier as well as $\MAC$ tag. The webserver, upon receiving a nonce in the TLS handshake, will have to \emph{decode} what they have received.

Note that should $N$ vary (e.g. if using TLS 1.2), then the bit distributions of the plaintext may be altered accordingly (see Section~\ref{subsection:bitdistribution}). In addition, should $N$ be larger, there may be extra information included (other than distress state, identifier, and tag) such as severity of distress, and further action requested from the user at that instance.

\subsection{Encoding and Decoding Distress}


The main idea of the encoding algorithm is to encrypt a 127-bit integer $r$ with modified Elliptic Curve El Gamal (mECEG, discussed in Section~\ref{section:PKE}), where $r$ contains information about a distress state. If the user is indeed in distress, $r$ would also include their identifier, as well as a $\MAC$ tag for integrity.

In addition, fix $m_d, m_i, m_t$ such that $m_d + m_i + m_t = 127$, with each choice discussed in Section~\ref{subsection:bitdistribution}. Let $\Enc$ be the mECEG encryption algorithm, and $\Encode$, as described in Algorithm~\ref{algorithm:encode} describes how the user encodes their distress and identifier into a nonce.                                                                                                                                                                                                                                                                                                                                                                                                                                                                                                                                                           

\begin{algorithm}[tp]
\caption{Encoding a distress in a 256-bit nonce} \label{algorithm:encode}

\begin{algorithmic}[1]

\Function{\textsc{\Encode}}{$\id, \MACtag,pk$}
	\State \textbf{return} $\Enc_{pk}(1^{m_d} || \id || \MACtag)$
\EndFunction
\end{algorithmic}
\end{algorithm}

The function $\Encode$ is used only when the user wishes to signal distress. It requires three inputs: the user's ID, a $\MAC$ tag, and the public key of the webserver. 
The function simply concatenates a string of $1$s, the user ID, and the $\MAC$ tag. This 127-bit number is then encrypted using $pk$, the public key of the webserver, to produce a 256-bit \emph{distress nonce}.


Algorithm~\ref{algorithm:decode} describes the function $\Decode$, and how the webserver, upon receipt of the nonce, will be able to extract whether or not the user signals a distress; and when they do, also the user identifier and the $\MAC$ tag, the latter used for integrity and freshness to be verified by \DCP.                                                        

\begin{algorithm}[tp]
\caption{Decoding a nonce to obtain distress value}  \label{algorithm:decode}

\begin{algorithmic}[1]
\Function{$\Decode$}{$\nonce, sk$}
\State $r := \Dec_{sk}(\nonce)$ \label{step:decrypt}
\If{$\MSB_{m_d}(r) == 1^{m_d}$}
	\State $\id := \MSB_{m_i}(\LSB_{127-m_d}(r))$
	\State $\mathsf{tag} := \LSB_{m_t}(r)$
	\State \textbf{return} $\true, \id, \MACtag$
\Else
	\State \textbf{return} $\mathrm{not \mathunderscore distress}$
\EndIf
\EndFunction
\end{algorithmic}
\end{algorithm}

We discuss our design choice. As mentioned in Section~\ref{section:signaldistress}, if the user signals distress, $\Encode$ will return a distress nonce, which the webserver will decrypt to obtain $\id$ and $\MACtag$. Indeed, the webserver will have access to $\id$. One may be tempted to encrypt $\id$ so that the webserver has no access to this, however using $\Enc$ takes a lot of space. In addition, a malicious webserver also has access to the user's IP address, so having a the user identifier doesn't add much extra information if the goal of $\advweb$ is to violate privacy.

The user constructs a $\MAC$ tag before inputting it into $\Encode$, which is used to obtain integrity for \DCP. This was added to solve two problems that a scheme without $\MACtag$ would have: (1) the webserver can send distress signals from arbitrary users by guessing $\id$; (2) if the user has previously sent a distress signal, then the webserver is able to replay it and send valid distresses to \DCP\ with the correct $\id$, without the user's knowledge. Using $\MACtag$ which incorporates a sequence number solves these issues.

\sloppy Another option to obtain integrity and freshness, other than using a MAC (and sequence number), is to use a timestamp and simply construct $\MAC_{k_{AC}}(\xtime)$, with \DCP\ checking MAC values of $[\xtime + t, \xtime - t]$ for some buffer period~$t$. However this introduces time synchronisation as an additional requirement. 

Lastly, we did not add an integrity check between the user and the webserver embedded within the client nonce. This is simply because TLS is integrity protected; that is, during the TLS connection between the user and webserver, if $\advloc$ modifies the client nonce, then this will be detected by the client, as the user and webserver will not be able to compute the same handshake secret so the TLS handshake cannot be completed. As we mentioned before, this is an attack on availability that we do not consider.

We now proceed in constructing a fixed-length reversible function which utilises $\Encode$ and $\Decode$.

\begin{prop} \label{prop:distresshidingfunction}
Let $(pk, sk)$ be a key pair for a public key encryption $\Enc$. The following function 
\[
f_N \left(d, (\id, \MACtag), pk\right) = 
	\begin{cases}
	\Encode( \id, \MACtag, pk) &\text{if } d = \true \\
	\PRG(N) \ &\text{if } d = \false
       \end{cases} 
\]
where $\PRG(N)$ outputs a pseudorandom string of size $N$, is a fixed-length reversible function. 
\end{prop}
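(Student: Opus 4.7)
My plan is to verify the two defining properties of a fixed-length reversible function directly from the construction, taking $g=\Decode$ as the inverse with $k'=sk$, the private key matching $pk$.

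First, I would check the fixed-length property, which is essentially immediate. In the $d=\true$ branch, $\Encode$ outputs an mECEG ciphertext of the fixed 256-bit client-nonce length by construction, and in the $d=\false$ branch, $\PRG(N)$ outputs an $N$-bit string by definition. Since $N=256$ in both branches, property~(1) holds.

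For property~(2), I would propose $g(y,k')=\Decode(y,k')$ and split into the two cases of~$d$. The $d=\true$ case is by construction: $\Encode$ produces $\Enc_{pk}(1^{m_d}\,||\,\id\,||\,\MACtag)$, so decrypting with $sk$ recovers the plaintext $r$ whose top $m_d$ bits are exactly $1^{m_d}$, causing $\Decode$ to enter the $\true$-branch and extract $\id$ and $\MACtag$ from the appropriate bit-windows by Algorithm~\ref{algorithm:decode}. This gives $g(f_N(\true,(\id,\MACtag),pk),sk)=(\true,\id,\MACtag)$ with probability~1.

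The $d=\false$ case is where the ``except with negligible probability'' clause is needed, and this is the main obstacle. Here $g(f_N(\false,\cdot,pk),sk)=\Decode(\PRG(N),sk)$, which returns $\false$ (as $\mathrm{not\mathunderscore distress}$) precisely when $\MSB_{m_d}(\Dec_{sk}(\PRG(N)))\neq 1^{m_d}$. I would argue this in two steps. By the PRG security assumption, $\PRG(N)$ is computationally indistinguishable from the uniform distribution on $\{0,1\}^N$, so up to a negligible term we may analyse $\Dec_{sk}(U_N)$ in place of $\Dec_{sk}(\PRG(N))$. Then I would invoke the pseudorandomness property of mECEG ciphertexts (relied on implicitly throughout Section~\ref{section:algorithm}): decrypting a uniformly random 256-bit string yields a plaintext whose top $m_d$ bits are $1^{m_d}$ with probability at most $2^{-m_d}$, which is negligible for the $m_d$ chosen in Section~\ref{subsection:bitdistribution}. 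Combining the two bounds gives a negligible failure probability, which is exactly what property~(2) demands.

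The subtle point I would be most careful about is that decoding a pseudorandom string may land outside the image of $\Enc_{pk}$ or trigger edge cases of mECEG's decryption routine; I would therefore make the argument robust by phrasing property~(2) as ``$\Decode$ returns a value consistent with $d$,'' not ``$\Decode$ inverts $\Enc$,'' since in the $\false$ branch we only need the $\mathrm{not\mathunderscore distress}$ output and not a meaningful plaintext. With that framing the proof reduces cleanly to the two-line probability bound above.
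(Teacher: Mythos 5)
Your proposal is correct and follows essentially the same route as the paper's proof: take $g=\Decode$ with $k'=sk$, get the fixed length directly from the construction, handle $d=\true$ by correctness of mECEG decryption of $1^{m_d}\,\|\,\id\,\|\,\MACtag$, and handle $d=\false$ by the $2^{-m_d}$ probability that a (pseudo)random string decodes to a plaintext with the distress marker. Your added care about the PRG-to-uniform reduction and about strings outside the image of $\Enc_{pk}$ only tightens steps the paper asserts more briefly.
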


\begin{proof}
Let $g_N(y, sk) = \Decode(y, sk)$. Given how we choose $\Enc$ in $\Encode$ to have a fixed output size (see Section~\ref{section:PKE}), then the first criteria is satisfied.

Let $d = \false$. Then $g_N(f_N(d, (\id, \MACtag), pk), sk) = \Dec_{sk}(r)$ where $r$ is an output of $\PRG(N)$. Then $\Decode(r, sk)$ will output $\mathrm{not \mathunderscore distress}$ for all $r' = \Dec_{sk}(r)$ except for $r'$ such that $\MSB_{m_d}(r') = 1$, and the probability of that happening is $1/2^{m_d}$ as $r$ is pseudorandom. Hence, if $m_d$ is large enough, then $d$ is preserved except with negligible probability.

Now consider the case when $d = \true$. We have $\Encode(\id, \MACtag, pk) = \Enc_{pk}(r)$ where $r=1^{m_d}||\id || \MACtag$. Now, given that $(pk, sk)$ is a public key encryption pair, then $\Dec_{sk}(\Enc_{pk}(r)) = r$, except for negligible probability of decryption error. Now, as $\MSB_{m_d}(r) = 1^{m_d}$,  $\Decode$ returns $\MSB_{m_i}(\LSB_{127-m_d}(r)) = \id$ and $\LSB_{m_t}(r) = \MACtag$ by construction. That is, indeed, $\Decode(\Encode(\id, \MACtag, pk), sk) = \true, \id, \MACtag$.
\end{proof}

Note that the webserver, upon receipt of a TLS handshake from \emph{any} user, would decrypt the client nonce as part of $\Decode$, regardless of whether or not the user is enrolled in the scheme. Due to our choice of $(\Gen, \Enc, \Dec)$, given the original security properties of a nonce, a false positive will happen with probability $\frac{1}{2^{m_d}}$. If such a false positive occurs, the webserver will also obtain some $\id'$ and $\MACtag'$ and forwards them to \DCP. That is, other than receiving a distress signal directly from a user, the webserver also has the function of filtering out false positives. 

Upon receipt of $\id', \MACtag'$ from a user that did not signal distress, \DCP\ will check this with the user database. The probability that the $\id$ is valid is $\frac{N_{user}}{2^{m_i}}$, where $N_{user}$ is the total number of enrolled users. If, already with probability $\frac{1}{2^{m_d}} \cdot \frac{N_{user}}{2^{m_i}}$ such $\id'$ is valid, then the probability that a valid ${\MACtag' = \MAC_{k}(\id', \sqn)}$ occurs is negligible.


\subsection{Bit Distribution} \label{subsection:bitdistribution}

\begin{figure}[tp]
  \centering
  \includegraphics[width=\linewidth]{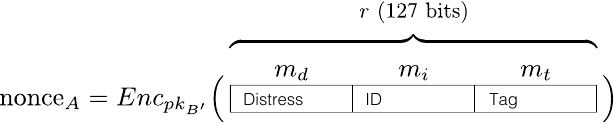}
  \caption{\textbf{Encoding of nonce data.} The input to the public key encryption is a concatenation of a distress signal, an identifier, and an authentication tag with lengths of $m_d$, $m_i$, and $m_t$ bits respectively. The three lengths have to add up to 127 bits which will yield a 256 bit output of mECEG encryption.}
  \label{fig:bitdist}
\end{figure}

The output of $\Encode$ needs to contain three different pieces of information: a marker that lets the webserver determine whether this is a valid distress signal or a normal nonce, an identifier, and a tag that authenticates the user to the \DCP. In this section we discuss the best way to allocate the limited number of available bits to these three fields. We denote the length of the three fields as $m_d$, $m_i$, and $m_t$ for number of bits for distress, identifier, and tag, respectively. This is visualised in Figure~\ref{fig:bitdist}. Note that we have a limited number of space for $m_d$, $m_i$, and $m_t$ given that $m_d + m_i + m_t = 127$ in order to fit in to the size requirement of client nonce.


The webserver always runs $\Decode$ on every TLS connections initiated. When $\Decode$ is called on a received nonce which shows distress (i.e. $\MSB_{m_d}(r) = 1^{m_d}$), there are three possibilities: (1) the user with corresponding $\id$ has signalled distress, (2) a false positive from a user enrolled in the system, (3) a false positive from a user not enrolled in the system. 

The probability of a false positive happening, from \emph{any} user, is~$1/2^{m_d}$. Hence, we want $m_d$ to be as large as possible as this would reduce the probability of false positives. Ideally, $m_d \geq 17$ so that less than~1 in~100,000 TLS connections would come up as false positives. For $m_d = 20$, this goes up exponentially to~1 in~1,000,000.

Now, we want $m_i$ to be large enough so that all users who want to enrol are able to, and that the scheme is more resistant to resource depletion attacks. In addition, with a larger $m_i$, a malicious webserver $\advweb$ is less likely to be able to `guess' arbitrary $\id$s, though even though they succeed in guessing a valid $\id$, the probability that they will pass the integrity verification through $\MAC_{k_{AC}}(\id, \sqn)$ should be negligible---hence, lastly, $m_t$ indicates the size of the $\MACtag$, and the larger $m_t$ is, the better $\MAC$ security obtained. 

Though we do not specify the exact size of $m_d, m_i,$ and $m_t$, we illustrate the case for $m_d = 32$. That is, the probability of the webserver in receiving a false positive is $1/2^{32}$, which, on average, means that the webserver receives one false positive in every 4,294,967,296 TLS connections. This leaves $m_i + m_t = 95$, so if $m_i = 32$ and $m_t = 63$, the user identifier gives space to over~4 billion users, and a $\MAC$ tag with output size of~63-bits. 


\subsection{Elliptic Curve El Gamal} \label{section:PKE}


A background of elliptic curves and elliptic curve El Gamal (ECEG) is laid out in Appendix~\ref{appendix:EC}.

Traditionally, a ciphertext is of the form $(P,Q)$ where ${P = (X_P, Y_P)}$, ${Q = (X_Q, Y_Q)}$. Note that for an elliptic curve ${Y^2 = X^3 + AX + B}$ over $\F_q$, then we use the modern approach of simply sending the $x$-coordinate of $P$ and $Q$, as well as two additional bits $b_P$ and $b_Q$ showing which square root of $X^3 + AX + B$ are the $y$-coordinates of $P$ and $Q$ respectively. This is an inexpensive computation by the receiver, and allows us to save space~\cite{menezes}. We use this approach, which we refer to as \emph{modified} El-Gamal encryption (mECEG) in throughout this paper.

Unfortunately, the ciphertexts produced from mECEG are not indistinguishable from random. Indeed, because ciphertexts are elliptic curve points, they follow some structure that allows an adversary to come up with a strategy when trying to see if a ciphertext is an output of an mECEG encryption, instead of chosen uniformly from random. There are several attacks that an adversary can do, as discussed in~\cite{10.1145/2508859.2516734}, but the most severe one is simply checking whether ${X^3 + AX + B}$ is a quadratic residue in the field, which the adversary may succeed with probability $\frac{1}{2}$. We show later, that even with this extra information, the adversary will not be able to distinguish a distribution which contains a distress signal against the uniform distribution.

We chose mECEG instead of other encryption schemes as it satisfies the two main conditions: a ciphertext size as a function of the plaintext size, as well as its semantic security, so that TLS security is maintained. Other encryption schemes, for example \randomCPA\ encryption schemes would have a better property: their ciphertexts are indistinguishable from random. However, \randomCPA\ schemes are not practical in this sense: the ciphertexts are too long to fit into the 256-bit space (without compromising on security), and the decryption takes 128 exponentiations, which hinders scalability as webservers will unlikely wish to have such overhead.
\section{Computational Overhead} \label{section:implementation}

As the participation of webservers is necessary for the scheme to function, it is important that our design does not generate a large amount of overhead in computation and communication for the website.

Should a webserver participate in our scheme, whenever a TLS connection is requested, the webserver is required to, in parallel to completing the TLS handshake, decode the client nonce received during Client Hello. Note that for the webserver,  a TLS handshake already contains two expensive operations: the calculation for the early shared secret key by multiplication of elliptic curve points, and signing a hash using the certificate's public key for authentication. Our protocol adds an additional computational overhead of one elliptic curve multiplication during decryption. The additional calculation of two $y$-coordinates of the ciphertext elliptic curve points are inexpensive~\cite{menezes}.

After decryption is completed, the webserver checks whether the plaintext indicates a distress, through a bitwise AND operation, which is computationally cheap. Should the plaintext reveal a distress, the webserver proceeds to encrypt the user ID and tag and compute a MAC, with symmetric keys obtained during enrolment. After \DCP\ verifies that the distress signal is from a valid user, then the webserver verifies a MAC and decrypts the message from \DCP\, again symmetrically. These operations are not computationally expensive. That is, design choices we made have been to minimise this overhead to one extra elliptic curve multiplication operation, and an increase in distress received only affects overhead linearly.

We have previously argued that the number of false positives received by the webserver is low for $m_d = 32$. To illustrate this, consider Google search, which is the most visited website in the world by traffic~\cite{semrush}. Google on average receives over 40,000 search queries per second~\cite{GoogleStat}, or over 3.5 billion per day. Although there is no official data on how many servers are in Google data centres (a 2016 estimate is 2.5 million servers~\cite{GoogleServers}), if 100,000 servers are dedicated to Google Search and assuming that the TLS connections are distributed evenly across all servers, this means that for each server, a distress false positive happens once every 34 years. Even in the worst-case-scenario where all the connections go to a single server, on average, a distress false positive will happen every 29.83 hours. 

\section{Security Analysis} \label{section:security}

We assume the following:

\begin{enumerate}
\item[A1] The certificate authority issuing certificates of each protocol party is honest. \label{assumption:CA}
\item[A2] The signature scheme used by each protocol party is secure; that is, the probability of successfully forging a signature without access to the secret key is negligible. \label{assumption:signature}
\item[A3] The MAC used by each protocol party is secure. \label{assumption:MAC}
\item[A4] The probability that an honest party picks the same nonce twice is negligible. \label{assumption:nonce}
\end{enumerate}

\subsection{Server and User Enrolments}

We note down the functional and security guarantees from the server and user enrolments below. 

\begin{guarantee}[Server Authenticity] \label{guarantee:serverauthenticity}
The server enrolment protocol will only complete successfully, if the webserver is able to produce a valid signature corresponding to its certificate (real world identity). 
\end{guarantee}

\begin{proof}
\sloppy
If an adversary wants to claim that they are the webserver $B$, they have to successfully send $x$, $B$, $\Cert_B, pk_{B'}$, $\Sign_B(x, B, pk_{B'})$. 
To replay the message, they need to have previously captured $\Sign_B(x, B, pk_{B'})$ from another enrollment, but each enrollment is done in its own TLS session so this message is not available to the attacker. To craft the message, the attacker has to produce $\Sign_B(x, B, pk_{B'})$ using a private key associated with $\Cert_B$. Given our Certificate Authority is a trusted entity (A1) and the underlying signature scheme is secure (A2) this is not possible. 
\end{proof}

Functionality-wise, the above guarantee means that \DCP\ can be convinced that the identity calimed by the webserver indeed belongs to the webserver.

\begin{guarantee}[Server Shared Key] \label{guarantee:kBC}
  If the server enrolment protocol is completed successfully, then the webserver obtains a symmetric key that is known only to the webserver and \DCP.
\end{guarantee}

\begin{proof}
Firstly, \DCP\ is authenticated during the TLS handshake, and the webserver is authenticated by Guarantee~\ref{guarantee:serverauthenticity}. Both $g^b$ and $g^c$ are sent within the same TLS session, so confidentiality and integrity follows.
\end{proof}

\begin{guarantee}[DCP Authentication]
  If the user enrolment protocol is completed successfully, then the user can be certain that they have been speaking to \DCP.
\end{guarantee}

\begin{proof}
  \DCP\ acts as the TLS server in the TLS connection, and is thus authenticated during the TLS handshake.
\end{proof}

\begin{guarantee}[User Shared Key and Registration] \label{guarantee:kAC}
  If the user enrolment protocol is completed successfully, then the user is registered and the user obtains a symmetric key and sequence number that is known only to the user and \DCP.
\end{guarantee}

\begin{proof}

The user has a TLS session with DCP by Guarantee \ref{guarantee:serverauthenticity}. Both $g^b$ and $g^c$ are sent within the TLS session, so confidentiality and integrity of the key follows.
\end{proof}


\subsection{Main Protocol}

We show the security guarantees from our main protocol, when the user signals distress. A crucial security property we want is captured in the following theorem.

\begin{prop} \label{thm:indistinguishability} Let $\Pi = (\Gen, \Enc, \Dec)$ be modified Elliptic Curve El Gamal encryption scheme used by the function $\Encode$. If $PRG$ is a secure pseudorandom number generator then the following fixed-length reversible function
\begin{equation*}
f_N\left(d, (\id, \MACtag), pk\right) = 
	\begin{cases}
	\Encode( \id, \MACtag, pk) &\text{if } d = \true \\
	\PRG(N) \ &\text{if } d = \false
       \end{cases} 
\end{equation*}
with $g_N(r, sk)= \Decode(r, \sk)$ is $\IndDistress$ secure.
\end{prop}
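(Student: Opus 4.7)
The plan is to prove that $f_N$ is $\IndDistress$-secure by a hybrid argument that connects the $b=0$ and $b=1$ distributions of the game in Definition~\ref{def:distressdistinguish}, invoking the pseudorandomness of $\PRG$ and the IND-CPA security of mECEG at the appropriate transitions.

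Concretely, I would interpolate through a chain $G_0 \to G_1 \to G_2 \to G_3 \to G_4$, where $G_0$ is the $b=0$ distribution ($n$ independent outputs of $\PRG(N)$); $G_1$ replaces all PRG outputs by $n$ independent uniform $N$-bit strings; $G_2$ replaces one of these uniform strings by $\Enc_{pk}(u)$ for a uniformly random plaintext $u$; $G_3$ replaces that random plaintext by the real distress plaintext $1^{m_d} || \id || \MACtag$; and $G_4$ puts the remaining $n-1$ samples back to being PRG outputs, matching the $b=1$ distribution. The transitions $G_0 \to G_1$ and $G_3 \to G_4$ would be handled by a standard $n$-step hybrid over the $\PRG$ samples, losing a factor of at most $n$ in the PRG distinguishing advantage. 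The transition $G_2 \to G_3$ reduces directly to IND-CPA security of mECEG: an adversary distinguishing $G_2$ from $G_3$ is used as a subroutine by an IND-CPA adversary who chooses the challenge pair $(u,\, 1^{m_d} || \id || \MACtag)$ and plants the challenge ciphertext at position $j$.

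The main obstacle is the $G_1 \to G_2$ step, in which a single uniform $N$-bit string is swapped for a single mECEG ciphertext of a uniform plaintext. This is non-trivial precisely because mECEG is not $\randomCPA$, as flagged in Section~\ref{section:PKE}: the x-coordinate of each ciphertext point must lie on the curve, so the quadratic-residue test ``$X^3 + AX + B$ is a square'' always passes on a real ciphertext but only with probability roughly $1/2$ on a uniform string. My plan is to argue that any PPT distinguisher between $G_1$ and $G_2$ induces a PPT single-sample distinguisher between a uniform $N$-bit string and an mECEG ciphertext of a uniform plaintext -- since the ``special'' position is statistically independent of everything else in both games and the remaining $n-1$ samples are identically distributed -- and then to bound this single-sample advantage using the structural analysis of the ECEG image distribution sketched in Section~\ref{section:PKE} and in~\cite{10.1145/2508859.2516734}.

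The hardest step will be honestly executing this last bound: one must rule out not just the quadratic-residue test but any polynomial-time statistical test on the ciphertext structure, and the bound must be negligible in the security parameter rather than merely $O(1/\sqrt{n})$. If the per-sample gap cannot be driven below any inverse polynomial, the fallback would be to either strengthen the scheme (e.g., apply an Elligator-style point encoding so that ciphertexts are indistinguishable from uniform) or to state the guarantee as an explicit concrete-security bound parameterised by the number of nonces, rather than an asymptotic negligibility claim.
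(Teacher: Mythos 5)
There is a genuine gap, and it sits exactly where you flagged it: the $G_1 \to G_2$ transition cannot be closed along the route you propose. The easy parts of your chain are fine (the PRG swaps and the IND-CPA swap of a random plaintext for $1^{m_d}||\id||\MACtag$), but reducing a $G_1$-vs-$G_2$ distinguisher to a \emph{single-sample} distinguisher between a uniform $N$-bit string and an mECEG ciphertext is a reduction to a false statement with the paper's parameters: a real mECEG ciphertext passes the on-curve/quadratic-residue test on both of its $X$-coordinates with probability $1$, while a uniform $256$-bit string passes with probability about $1/4$, so the single-sample advantage is a constant (about $3/4$), not negligible. If that step admitted a negligible bound you would in effect have proven mECEG to be \randomCPA, which Section~\ref{section:PKE} explicitly says it is not. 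The structural problem is that isolating the planted ciphertext in its own per-sample hybrid discards the very feature the construction depends on: the adversary does not know which of the $n$ nonces to test, and a constant fraction of honestly generated nonces already exhibit ciphertext structure, so the one distress nonce hides in that crowd. Your fallbacks (an Elligator-style encoding, or restating the guarantee as a concrete bound in $n$) modify the scheme or the claim, so they do not prove the proposition as written.

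The paper avoids per-sample pseudorandomness altogether: it bounds the statistical distance between the entire $b=0$ and $b=1$ nonce distributions $D_0$ and $D_1$ directly, using that the distress nonce occupies one of the $n$ positions and that the mECEG image (two $X$-coordinates, each on the curve with probability roughly $1/2$, giving about $2^N/4$ possible outputs) is spread over exponentially many values, so the probability of any individual string shifts only by on the order of $2^{-N}$; the computed distance $\frac{n-4}{n\cdot 2^N}$ is negligible, and computational indistinguishability follows because statistically close distributions are polynomially indistinguishable. Appendix~\ref{appendix:adversaryadvantage} then works out explicitly that the on-curve structure test you worry about yields vanishing advantage when run over the whole set of nonces. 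To repair your argument in its own style you would have to collapse $G_1$ through $G_3$ into a single transition argued statistically over the joint $n$-sample distribution with the ciphertext's position kept random---that is, reproduce the paper's statistical-distance step---rather than route it through single-ciphertext indistinguishability from uniform, which fails with constant advantage.
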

\begin{proof}
The strategy of this proof is simple: statistically close distributions are polynomially indistinguishable~\cite{GOLDWASSER1984270}. Let $D_b$ be the distribution of nonces formed when bit $b$ is chosen, i.e. if $b= 0$ then $D_b$ is simply the uniform distribution, and when $b=1$ then there exists precisely one distress signal in $D_1$. Let $U$ be the uniform distribution (so the range of $D_0$ and $D_1$ are $U$) and $V$ be the distribution of elliptic curve points. The statistical distance between $D_0 = \{X_1, X_2, \hdots\}$ and $D_1 = \{Y_1, Y_2, \hdots\}$ is:
\begin{align*}
\Delta(D_0, D_1) &= \sum_u | \Pr(X_i = u) - \Pr (Y_i = u) | \\
\end{align*}
Now, for all $u \in U$ except for one distress point $d$, $\Pr(X_i = u) = \Pr (Y_i = u)$. Also note that the number of possible outputs of El Gamal is~$\frac{2^N}{4}$ as there exists two $X$-values in the ciphertexts, each lying on the curve with probability~$\frac{1}{2}$. Hence,
\begin{align*}
\Delta(D_0, D_1) & =  | \Pr(X_i = d) - \Pr (Y_i = d) | = \frac{1}{2^N} - \frac{1}{n} \cdot \frac{4}{2^N} = \frac{n-4}{n \cdot 2^N},
\end{align*}
which is negligible.
\end{proof}

To demonstrate that this is the case, a possible adversary strategy that uses elliptic curve structure is provided in Appendix~\ref{appendix:adversaryadvantage}.

\begin{guarantee} \label{guarantee:bobauthentication}
If the main protocol is completed successfully, then for \DCP, it is guaranteed that the message from the webserver is fresh and indeed comes from an enrolled webserver.
\end{guarantee}

\begin{proof}
To break the guarantee, an adversary needs to send $B, c_B, \nonce_B, \MAC_{k_{BC}}(B, c_B, \nonce_B)$ for an enrolled webserver $B$. The adversary has to send a valid $\MAC$ signed using $k_{BC}$, a shared secret key between the webserver and \DCP. This key is known only to the webserver and \DCP, by Guarantee~\ref{guarantee:kBC}, and in particular the adversary doesn't have access to it. The adversary has two options in sending $\MAC_{k_{BC}}(B, c_B)$: by forgery or replay. Forgery is impossible by Assumption~(A3). If the adversary replays a message, $c_B$ would be used twice, but $c_B$ contains the sequence number so it cannot be reused. 
\end{proof}

\begin{guarantee}\label{guarantee:charlieauthentication}
If the main protocol is completed successfully, it is guaranteed that the reply from \DCP\ is a response to the recent communication from the webserver, and indeed comes from \DCP.
\end{guarantee}

\begin{proof}

Firstly, assume that the $\DCP$ has never received the request from the webserver. That means that the $\DCP$ does not have access to $\nonce_B$, and hence they will be unable to construct $\MAC_{k_{BC}}(...,\nonce_B, ...)$ and the $\MAC$ verification by the webserver will fail.

Now for an adversary to claim to the webserver that they are \DCP, they need to send $m_1, m_2, \nonce_C$,$ \MAC_{k_{BC}}(m_1, m_2, \nonce_B, \nonce_C)$ successfully for an arbitrary $m_1, m_2, \nonce_C$. The adversary can do this by creating their own message or replaying a previously captured one. By Guarantee~\ref{guarantee:kBC}, only the webserver and \DCP\ know the key $k_{BC}$, so the adversary can only successfully forge the $\MAC$ with negligible probability by Assumption~(A3). To replay the message from \DCP\ the adversary would have to capture a previous message using the same $nonce_B$, however by Assumption~(A4) this happens only with negligible probability.

\end{proof}

\begin{guarantee} \label{guarantee:charliereceive}
If \DCP\ receives a valid $\id_A$ and a valid $\MACtag_A$, then it is guaranteed that the user with corresponding $\id_A$ has recently signalled distress.
\end{guarantee}

\begin{proof}
By Guarantees~\ref{guarantee:bobauthentication} and~\ref{guarantee:charlieauthentication} we know that the webserver and \DCP\ are indeed speaking to one another, so an adversary cannot be of the form $\advnet$ or $\advloc$. Hence, an adversary has to be $\advweb$.

For $\advweb$ to break this guarantee, they have to send $B$, $c_B$, $\nonce_B$, $\MAC_{k_{BC}}(B, c_B, \nonce_B)$ successfully without the user ever signalling distress, or by replaying a previous distress signal sent by the user.

We consider the first case. For any $\nonce_A$ that the adversary receives from the user, $\Decode(\nonce_A, sk_B)$ will output $\mathrm{not\mathunderscore distress}$ (except with negligible probability) and in particular, the adversary does not have access to $\id_A$. Hence, the adversary will need to guess $\id \leftarrow_R \{0,1\}^{m_{i}}$ and sends through $B, c_B, \nonce_B, \MAC_{k_{BC}}(B, c_B, \nonce_B)$ with ${c_B = \Enc_{k_{BC}}(\id, \MACtag_A)}$. Upon verification, $\advweb$ passes the identity verification since $B$ is in the server database, and $B$ has correctly signed a $\MAC$ with the shared key. However, \DCP\ upon obtaining $\id$ from decrypting $c_B$, will only pass the verification with probability $\frac{N_{user}}{2^{m_i}}$ where $N_{user}$ is the total number of users registered in the scheme. 
In addition, $\advweb$ needs to forge $\MACtag_A = \MAC_{k_{AC}}(\id, \sqn_A)$, which succeeds with negligible probability according to Assumption~(A4).

We now consider the second case, where the user has signalled distress in the past. In this case, $\advweb$ will receive $\id_A, \MACtag_A$ and as above, sends this in encrypted form to \DCP. Now, \DCP\ will receive a valid $\id_A$, but $\advweb$ has to pass the $\MAC$ verification: that is, $\advweb$ needs to send $\MACtag_A = \MAC_{k_{AC}}(\id_A, \sqn_A)$ successfully. According to Guarantee~\ref{guarantee:kAC}, both $k_{AC}$ and $\sqn_A$ are shared only between the user and \DCP. $\advweb$ can try to forge a $\MAC$ without the $k_{AC}$, which is impossible except with negligible probability (Assumption~(A4)), or replay a previously sent $\MACtag_A$. However, this requires the user to send the same $\sqn_A$ more than once, which is not possible by design.
\end{proof}

We proceed to the main guarantee for the user:

\begin{guarantee} \label{guarantee:main}
If the user signals distress and the protocol completes successfully, then the user is certain that \DCP\ has recently received the distress.
\end{guarantee}

\begin{proof}
For the adversary to convince the user that \DCP\ has received the distress, the adversary needs to send $\MAC_{k_{AC}}(\sqn_A)$ to the user and pass the verification. Firstly, given that the this communication happens through a TLS session, it is secure against $\advnet$ and $\advloc$. Hence, an adversary has to be $\advweb$.

To send a correct $s_{AC} = \MAC_{k_{AC}}(\sqn_A)$, $\advweb$ can do this in two ways: forge the MAC or replay a previous message. By Guarantee~\ref{guarantee:kAC} the key $k_{AC}$ is known only to the user and \DCP, so any attempt at forging $\MAC_{k_{AC}}(\sqn_A)$ will only succeed with negligible probability according to Assumption~(A3). To replay a previous message $\MAC_{k_{AC}}(\sqn_A)$ will fail on the user's verification, as $\sqn$ will be outdated.

We conclude that the message $\MAC_{k_{AC}}(\sqn_A)$ has been sent by \DCP\ is fresh and without alteration from $\advweb$. \DCP, being an honest party, sends this message only after receiving a valid connection from the webserver; by Guarantee~\ref{guarantee:charliereceive}, the user has indeed recently signalled distress. 
\end{proof}

Note that there are a few situations in which $\advweb$ are able to  act maliciously within our system. Firstly $\advweb$ in receipt of a distress signal from the user, may not forward that signal to \DCP; however $\advweb$ does this, the user will not receive the correct feedback, and using Guarantee \ref{guarantee:main}, the user knows that the distress signal is never received by \DCP. Secondly, if the user sends a distress signal which is forwarded from $\advweb$ to \DCP, after \DCP's receipt of the distress $\advweb$ may not send the confirmation message to the user, which may lead to the user believing that the distress has never been received by \DCP. These are attacks on availability and are not solvable by any protocol, and therefore out of scope. In practice, if the user does not receive a confirmation from a particular website, they can simply try signalling distress to another webserver.

\section{Related Work} \label{section:litreview}

This section reviews related work. We focus on three separate topics: related adversary models, work dedicated to designing a method for a distress signal to be sent, and lastly, undetectable communications.

\textbf{Adversary Models.} Our adversary model is related to those discussed under \emph{endpoint compromise}, which the adversary would be considered to have full access to device secrets. This is indeed the model described under \emph{insider threat}~\cite{10.1007/3-540-36415-3_2,10.1145/2904018} for an adversary who has legitimate, full access to a device and wishes to exfiltrate data through a monitored network, usually a covert channel. In~\cite{DECIM} the authors considers an adversary who has full access to the network and messaging server, and is able to periodically compromise a device but does not consider the case should the adversary is aware that the user knows, and acts on, the compromise (trivially through removing the adversary's access).

Given how our adversary model describes a high degree of control of the network, this might seem similar to a censorship resistance system (well reviewed in~\cite{7546542,khattak_sok_2016}), where a user wishes to exchange communication with a receiver through a communication channel controlled by a censor who actively attempts to prevent this communication. Censors may have a set of distinguishers which takes in network traffic to be analysed and outputs `accepted' traffic flows. This is similar to our scenario where the adversary wishes to distinguish a distress communication from a normal communication on the network, as we assume communications may take place. However, in our case, the adversary has the additional capability due to their locality to the user.\\

\textbf{Distress Signalling.} Coercion attack, sometimes called by the term \emph{rubberhose cryptanalysis} is a physical attack to bypass security, usually authentication, by physically forcing the user to reveal their secrets. A practical solution to a coercion attack is the use of a second authentication secret as part of a covert mechanism, which would alert distress during an authentication process (\cite{davida_anonymity_1997,clark_panic_nodate,10.1145/1866886.1866895,funkspiel,Zhao2015GracewipeSA} and in commercial products~\cite{safewatch}). Though these schemes allow the adversary to have physical and logical access to the devices, the adversary does not have access to the network and only has access to the private information if authentication is granted. \\

\textbf{Undetectable Communications.} The initial study of undetectable communications was encapsulated in the prisoner's problem~\cite{Simmons1984}, which describes two prisoners trying to communicate to each other through a warden, who is able to modify the messages and allow exchange of messages only if they look innocuous.

The scenario of an insider threat wanting to covertly send information as well as someone circumventing censorship are usually achieved by a scheme which ensures stealth. There are numerous ways in which this can be done, through steganographic methods or covert channels ~\cite{10.1145/2897845.2897913,HTTPcovert,NTPCovert,IPv6Covert,maurice2017hello}.

The notion of \emph{undetectable communication} specifically for the case of Online Social Networks (OSNs) was explored in~\cite{6890919}, where formal definitions are introduced. In~\cite{10.1145/2435349.2435351}, the authors proposed a system to allow users to exchange data over existing web-based sharing platform while maintaining confidentiality, as well as hiding from a casual observer that the exchange is taking place.

For steganographic or censorship resistance purposes, there are several work looking into encryption functions that have pseudorandom ciphertexts, both in the private-key setting~\cite{10.1007/3-540-45708-9_6} and public-key setting~\cite{von2004public}. In~\cite{10.1145/2508859.2516734}, building on~\cite{10.1007/978-3-540-30108-0_21}, the authors proposed a method to encode elliptic curve points indistinguishable from uniform random strings, which would assist elliptic curve cryptography to be suitable as a censorship circumvention tool. Unfortunately, these constructions have the downside of having extremely long ciphertexts. The constructions in~\cite{10.1007/978-3-540-30108-0_21,10.1007/978-3-642-15317-4_18} also produces pseudorandom ciphertexts, while aiming for space efficiency. 
Our approach is similar to that in~\cite{10.1007/10958513_13,266599}, where both papers proposed methods of embedding information in the TLS client nonce. The former was proposed to embed an Escrow Agency Field (EAF) in protocol fields, with the key difference with our scheme is that encryption happens at \emph{every} communication, not only during specific instances hence why their security reduces directly to semantic security of ECEG. In the latter, the client assumes a symmetric shared key with the \emph{decoy router} as opposed to using public keys.

\section{Discussion}
\label{sec:discussion}

In this section we discuss a few topics with additional points we wish to clarify.

\subsection{Enrolment}

To make use of our proposed scheme, a user is required to enrol when the adversary is not present. This is important to allow the main distress signal protocol to be initiated within a single action and for the security guarantees be met. We stress that our scheme is not suitable for situations where enrolment is not possible.

Acts of abuse are often difficult to recognise, and our scheme is only appropriate to be set up in cases where the user has recognised that they are in a vulnerable position and are able to prepare should they decide to seek help. Unfortunately, in many cases domestic abuse survivors unable to leave their abuser, or return to the abuser after an initial offence~\cite{safelives1,safelives2}.

We emphasise that we do not intend to minimise the difficulty of helping domestic abuse survivors. Our proposal can fit into a set of solutions that are currently available---from practical advice in managing personal devices and communications \cite{womensaid,applemanual}, to the support given by organisations supporting survivors of domestic abuse, and can be a valuable building block for future work.

This enrolment procedure is similar to those used in practice for other related schemes, for example Path Community~\cite{pathcommunity} were created for individuals to walk home safely, where one might add emergency contact numbers or guardians when they set up the application. When enrolling in this scheme, some personal information may be included so the backend can act appropriately, for example by contacting an emergency contact or forwarding the case to the relevant authority.

In other situations, for example journalists or activists travelling through a hostile nation, preparations can easily be made to face a situation where a distress signal may be required. In this case, the employer or organisation of the individual can set up an architecture where they are the  \DCP. The response to a distress by the individual may include contacting the relevant embassy including the necessary personal details, including location. Our scheme does not specify the particular user information required for enrolment, nor the response from the backend, as it is incredibly context-specific and we do not wish to narrow down these many possibilities of implementation.

\subsection{User Interface}

Our scheme specifies what happens in the background when distress is initiated by the user's device, but we have not specified the user's interactions with the device when initiating the distress signal, or in the verification of response from the \DCP\ through the webserver.

Firstly, the enrolment phase includes an installation of the distress system on the user's device. When the adversary is present and the user wishes to initiate the signal, some action is required from the user to do so. 

There are plenty of ways in which such action can be done stealthily, and there are plenty of existing techniques which can be implemented. For example, the user can initiate it by pressing a specific keyboard shortcut as they enter the website address (see, for example, \cite{shortkeys}), or entering a duress password if the interface is designed in a way that doesn't reveal any information about what's being entered, or pressing a hidden icon on the browser. Note that an action has to be designed to be done \emph{before} the user enters the website as the distress is sent during the TLS handshake, and not the session. There are many considerations a system designer may consider within this UI challenge, including stealth and usability, while reducing false positives through accidental inputs. At the same time, this action needs to be simple and easy to remember for the user who might be in a state of duress when sending the distress signal. This is an interesting challenge and a considerably important one, which we consider as future work on the deployment of this architecture, while noting that this design is entirely context-dependent.

In our protocol, the user verifies the response $s_{AC}$ to confirm that \DCP\ has indeed received the user's distress signal. The $s_{AC}$ is sent through the TLS connection from the webserver as part of the TLS session---that is, within the page content.  The $\mathrm{instruction}$ for how best to embed the confirmation in a page can be selected by the user at enrolment and communicated to the webserver by \DCP\ when a distress signal is received. To not limit the possibilities of implementation, we have intentionally not specified how, but we give several options below.

With a traditional adversary model, something as simple as including $s_{AC}$ as a HTTP-header that a plugin can verify would work. However, with $\advloc$ this may be problematic if the user is under strict surveillance, as such a header (or any other overt confirmation) would be visible to the adversary---this would defeat the user's goal of stealthiness. Instead, a solution could be that the confirmation is embedded as a specific content that will be unnoticed by the adversary. 

An ideal embedding would be one that gives the user positive confirmation that the distress signal has been received, and at the same time allows for $s_{AC}$ to be extracted by a plugin (or the browser) to be verified cryptographically. Any image generated from a seed (e.g., randomart~\cite{randomart}), can be easily verified, but whether that would appear ``normal'' on a website, is very context dependent. For websites that commonly include advertisements, it may be possible to display selected advertisements as confirmation, where the visual details (or indeed the text) encode the exact value of $s_{AC}$. If a website commonly includes dynamic content like video, a response that includes moving images or sound, are options. Particularly with video, there is the opportunity to include other measurable, non-visible parameters such as time delay and use timing-based covert channel techniques. We refer to existing research on undetectable communication~\cite{6890919,10.1145/2435349.2435351}, for additional examples.

In short, the response sent between the webserver and the user needs to be: (1) invisible to the adversary on the TLS connection, e.g., it is important that $s_{AC}$ is not being sent on its own, but is embedded within a page content so the existence of $s_{AC}$ is obfuscated, (2) verifiable by the user's device, (3) visually verifiable by the user, and (4) unnoticeable to the adversary who has visual access of the device. These are important design choices that need to be made when the architecture is deployed, and decisions are to be made dependent on the context and situation of the user as well as the webserver's purpose and design. We note that there is no one-size fits all solution that covers all of the possible use-cases of this architecture.


\section{Conclusion}

We introduced a new adversary model $\advloc$, who not only has visual access to the user's screen, but full control over all of the user's communication channels as well as the application layer content of the TLS. This includes scenarios including a domestic abuser, a compromised journalist, or unlawful arrest and search. We introduced distress signalling as a goal for a user to perform against such adversaries. We formalised the adversary's abilities in the form of a security game,  and the adversary's goal is to detect a distress signal within a set of normal communications.

To fulfil the user's goal, we constructed an architecture where the user can signal distress using webservers as an intermediary. To do so, the distress is embedded along with additional information as the client nonce in the TLS handshake between the user and the webserver; we specified this in our \emph{encode} function, which uses the Elliptic Curve El Gamal public key encryption scheme. Upon receipt, the webserver \emph{decodes} the distress and forwards the extra information to the backend so that appropriate action can be taken.

Using TLS as well as public key encryption allows for better scalability as well as stealth, and the protocol can coexist with normal uses of TLS. For the webserver, the decoding function only contributes to computational overhead equivalent to one exponentiation per TLS request, and a false positive happens very rarely, so there is very little overhead in terms of communication. When false positives on the webserver's side happen and forwarded to the backend, the backend will perform further integrity checks, so that actual false positives happen only with negligible probability. 

We performed a full security analysis of our architecture, including proving distress indistinguishability of our scheme. Lastly, we discussed further practical considerations that a full scale deployment needs to address.

\bibliographystyle{ACM-Reference-Format}
\bibliography{references}
\appendix
\section{Background on Elliptic Curves} \label{appendix:EC}

The notation in this section is standard, and further details can be found on~\cite{silverman}.

Let $\F_q$ be a finite field of size $q$, with $q$ a large prime power. An elliptic curve $\cale$ over $\F_q$ can be written in the form ${\cale_{A,B} = y^2 + x^3 + Ax + B}$, with $A,B \in \F_q$, with $B^2-4AC \neq 0$ and the point at infinity $\mathbf{\underline{o}} = [0:1:0]$. 

We denote $\cale(\F_q)$ the set of $\F_q$-rational points of $\cale$, that is, for $\cale = \cale_{A,B}$,
\[
\cale(\F_q) = \{(x,y) | y^2 = x^3 + Ax + B\} \cup \{\mathbf{\underline{o}}\}.
\]

An important theorem concerning the number of points of an elliptic curve is the following.

\begin{theorem}[Hasse]
Let $\cale$ be an elliptic curve defined over $q$. Then 
\[
|\#\cale(\F_q) - q - 1 | \leq 2\sqrt{q}.
\]
\end{theorem}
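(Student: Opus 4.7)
The plan is to use the Frobenius endomorphism as the central tool. Define $\phi : \cale \to \cale$ by $\phi(x, y) = (x^q, y^q)$ and $\phi(\mathbf{\underline{o}}) = \mathbf{\underline{o}}$. The first key observation is that a point $P \in \cale(\overline{\F_q})$ lies in $\cale(\F_q)$ if and only if its coordinates are fixed by the $q$-power map, equivalently $\phi(P) = P$, equivalently $P \in \ker(\phi - 1)$ where $1$ denotes the identity endomorphism. Thus $\#\cale(\F_q) = \#\ker(\phi - 1)$.

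Next I would argue that $\phi - 1$ is a separable isogeny. The differential of $\phi$ is zero (pull-back of the invariant differential is zero since $dx^q = 0$ in characteristic dividing $q$), whereas the differential of $1$ is the identity, so $d(\phi - 1) \neq 0$ and $\phi - 1$ is separable. For separable isogenies between elliptic curves, the degree equals the size of the kernel, giving $\#\cale(\F_q) = \deg(\phi - 1)$.

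The third step is the algebraic heart of the argument: the degree map, extended to the quadratic form $\deg : \operatorname{End}(\cale) \otimes \mathbb{Z} \R \to \R$, is a positive definite quadratic form. This follows from the existence of the dual isogeny and the identity $\hat{\psi} \circ \psi = [\deg \psi]$, so that the associated bilinear pairing $\langle \psi_1, \psi_2 \rangle = \deg(\psi_1 + \psi_2) - \deg(\psi_1) - \deg(\psi_2)$ is symmetric and the corresponding form is positive definite. For any positive definite quadratic form, the Cauchy--Schwarz inequality gives $|\langle \psi_1, \psi_2 \rangle| \leq 2\sqrt{\deg(\psi_1)\deg(\psi_2)}$.

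Finally I would apply this inequality with $\psi_1 = \phi$ and $\psi_2 = -1$. Using $\deg(\phi) = q$, $\deg(-1) = \deg(1) = 1$, and $\deg(\phi - 1) = \#\cale(\F_q)$, the Cauchy--Schwarz bound unwinds to $|\#\cale(\F_q) - q - 1| \leq 2\sqrt{q}$, as required. The main obstacle is the third step, establishing positive definiteness of $\deg$ on $\operatorname{End}(\cale)$; this is the nontrivial input and relies on constructing the dual isogeny and verifying $\hat{\psi}\psi = [\deg\psi]$. Everything else (separability of $\phi - 1$, identification of $\F_q$-rational points with fixed points of Frobenius, kernel-equals-degree for separable isogenies) is comparatively mechanical.
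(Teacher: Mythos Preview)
Your argument is the standard and correct proof of Hasse's theorem via the Frobenius endomorphism and the positive-definiteness of the degree form on $\operatorname{End}(\cale)$, exactly as in Silverman~\cite{silverman}. There is nothing to compare it against here: the paper does not prove Hasse's theorem at all. The statement appears in Appendix~\ref{appendix:EC} purely as quoted background, with no accompanying proof, and is immediately used only to justify the heuristic that a uniformly random $x \in \F_q$ lies on the curve with probability roughly $1/2$. So your proposal is not merely consistent with the paper's approach---it supplies a proof where the paper deliberately omits one.
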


Note that if $x^2 = a$ in $\F_q$, then $(-x)^2 = a$. Combined with Hasse's theorem, the probability that an $x \in \F_q$ lies on $\cale$ is in fact approximately $\frac{1}{2}$.

\subsection{Elliptic Curve El Gamal}

We describe the Elliptic Curve El Gamal (ECEG) public key encryption. The public parameters are:
\begin{enumerate}
\item a finite field $\F_q$ and elliptic curve $\cale: y^2 = x^3 + Ax + B$ defined over $\F_q$.
\item A point $P \in \cale(\F_q)$. 
\end{enumerate}
The receiver chooses $a \in \F_q$ as their secret key and publishes $aP$. For a sender to send a message $M$ to the receiver, they do the following:
\begin{enumerate}
\item choose $k \in \F_q$ and compute $k(aP)$.
\item Compute $Q = kP$ and $R = M + k(aP)$.
\item Send $(kP, R)$ to the receiver.
\end{enumerate}
Upon receipt, the receiver simply computer $M = R - a(kP)$. Correctness is immediate. In addition, ECEG is CPA-secure under the elliptic curve decisional Diffie-Hellman assumption.

In its original form, ECEG has a 1:2 expansion of plaintext to ciphertext~\cite{menezes}. However, to save space, the sender instead of sending $Q = (x_Q, y_Q)$ and $R = (x_R, y_R)$ can simply send $(x_Q, b_Q)$ and $(x_R, b_R)$, where $b_i$ are binary values denoting which square root the $y_i$ is, to save space.  This also eliminates an adversary's strategy of spotting distress signals by checking if, given $(X_P, Y_P)$ whether $Y_P^2 = X_P^3 + AX_P + B$ and if so, then with very high probability a distress has been sent.

\section{Demonstrating Adversary Advantage} \label{appendix:adversaryadvantage}

Consider the adversary $\adv$ with the following strategy. Given a distribution of nonces $r_1, \hdots\ r_n$:
\begin{enumerate}
\item for each $r_i$, check if $r_i$ has the ciphertext structure. 
\item If there exists a nonce that follows the ciphertext structure, then flip a coin.
\item If there are no nonces that follow the ciphertext structure, then output $b' =  0$.
\end{enumerate}

\begin{prop} Let $\adv$ be an adversary described above. Then $\IndAdv(n)$ is negligible.
\end{prop}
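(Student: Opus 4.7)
The plan is to compute $\Pr[\adv \text{ outputs } 1 \mid b = 0]$ and $\Pr[\adv \text{ outputs } 1 \mid b = 1]$ directly, and show that their difference decays exponentially in $n$. Let $p$ denote the probability that a uniformly random $N$-bit string parses as a valid mECEG ciphertext, i.e., that both encoded $x$-coordinates $X_P$ and $X_Q$ satisfy that $X^3 + AX + B$ is a quadratic residue in $\F_q$. By Hasse's theorem each $x$-coordinate lies on the curve with probability close to $1/2$, and since the two coordinates occupy disjoint bit-segments of the nonce they are independent under uniform input, giving $p \approx 1/4$ (and in any case a constant strictly in $(0,1)$).

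For $b = 1$, the challenger places a genuine ciphertext $\Enc_{pk}(1^{m_d} \| \id \| \MACtag)$ among the $r_i$'s; by construction of mECEG this ciphertext always passes the structure check (up to the negligible decryption error already accounted for in Proposition~\ref{prop:distresshidingfunction}). Hence at least one nonce passes with probability $1$ and the adversary always flips a fair coin, so $\Pr[\adv \text{ outputs } 1 \mid b = 1] = 1/2$. For $b = 0$, all nonces are PRG outputs; I would invoke PRG security via a standard $n$-fold hybrid replacing each PRG sample with a uniform $N$-bit string one at a time, so that this distribution is computationally indistinguishable from $n$ independent uniform samples up to some negligible $\mu(n)$. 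Under the uniform distribution the probability that \emph{no} nonce passes the check is $(1-p)^n$, and in that event the adversary outputs $0$ deterministically, while otherwise it flips a coin. This yields $\Pr[\adv \text{ outputs } 1 \mid b = 0] = (1 - (1-p)^n)/2 \pm \mu(n)$.

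Subtracting, the advantage is
\[
\IndAdv(n) = \left| \tfrac{1}{2} - \tfrac{1 - (1-p)^n}{2} \right| \pm \mu(n) = \tfrac{(1-p)^n}{2} + \mu(n),
\]
which is negligible in $n$ because $p$ is a constant strictly less than $1$ and $\mu$ is negligible by PRG security.

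The main obstacle I expect is being careful about the parsing of the $256$-bit nonce into two compressed curve points and verifying that the two quadratic-residue tests really are independent for uniform input (rather than mildly correlated through some shared arithmetic structure of the chosen curve); if exact independence fails, one only needs that the joint acceptance probability remains a constant bounded away from $1$, so the exponential decay still applies. The PRG-to-uniform hybrid and the combinatorics of $(1-p)^n$ are otherwise routine.
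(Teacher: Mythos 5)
Your proof is correct and follows essentially the same route as the paper's: both condition on $b$, note that the planted mECEG ciphertext always passes the structure check (so the coin flip gives exactly $1/2$ when $b=1$), and bound the $b=0$ case via the per-nonce acceptance probability $\approx 1/4$ with independence across nonces, arriving at an advantage of roughly $\tfrac{1}{2}\left(\tfrac{3}{4}\right)^n$. The only differences are cosmetic: the paper computes $\Pr(\exists i:\ ec(r_i)=1)=1-(3/4)^n$ by inclusion-exclusion and a binomial sum where you take the complement $(1-p)^n$ directly, and you make the PRG-to-uniform hybrid explicit where the paper invokes it only in passing.
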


\begin{proof}
We consider an adversary $\adv$ that uses knowledge of the elliptic curve structure to distinguish between whether or not a distribution of nonces contain a distress call. The adversary's distinguishing advantage is:
\begin{align*}
\IndAdv & = | \Pr (b' = 1 | b = 1) - \Pr(b' = 1| b = 0)|.
\end{align*}

Let $D_b$ be the distribution of nonces that the adversary receives; that is, when $b = 1$ there exists a distress signal in the distribution. Let $r_1, \hdots, r_n$ be the nonces that the adversary receives.  Write $r_i = X_{i,1} || b_{i,1} || X_{i,2} || b_{i,2}$, where $X_{i,j}$ is 127 bits long and $b_{i,j}$ are single bits. Now, if $r_i$ is an output of mECEG, then $X_{i,j}$ would be on the curve $\cale_{A,B}: Y_{i,j}^2 = X_{i,j}^3 + AX_{i,j} + B$ for some $Y_{i,j}$. That is, $X_{i,j}^3 + AX_{i,j} + B$ is a quadratic residue. 
From Hasse's theorem, this has probability approximately $1/2$. 

For a nonce $r_i$, let $ec(r_i) \in \{0,1\}$ denote whether $r_i$ has the structure of an mECEG ciphertext, that is, $ec(r_i) = 1$ if and only if both $X_{i,1}$ and $X_{i,2}$ lie on the curve $\cale_{A,B}$.

Now, when the adversary receives $D_1$, the distribution may come from either $D_1$ or $D_0$, so $\Pr(b'=1) = \frac{1}{2}$. When the adversary receives $D_0$, then the adversary looks into each nonce. If $ec(r_i) = 1$, then the adversary flips a coin and if $ec(r_i) = 0$ then the adversary outputs $b=0$. Hence,
\begin{align*}
\Pr (b' = 1| b=0) = \frac{1}{2}\Pr(\exists r_i \text{s.t. } ec(r_i) = 1) + 0\cdot\Pr (\forall r_i \text{s.t. } ec(r_i) = 0).
\end{align*}

Therefore,
$\IndAdv =  | \frac{1}{2} - \frac{1}{2}\Pr(\exists r_i \text{s.t. } ec(r_i) = 1)|.$
Now, let $A_i$ be the event that $ec(r_i) = 1$. Let $\Pr(ec)(n)$ be the probability that there exists an $r_i$ with $i \in \{ 1, \hdots, n\}$ such that it has the mECEG ciphertext structure, so by the inclusion-exclusion principle we have
\begin{align*}
\Pr(ec)(n)  = \Pr \bigcup_{i = 1}^n A_i  =\sum_{k=1}^n (-1)^{k+1} \left( \sum_{1 \leq i_1 < \hdots < i_k \leq n} \Pr\left(A_{i_1} \cap \hdots \cap A_{i_k}\right)\right)
\end{align*}

Now $\Pr(A_i)$ is the event where both $X_{i,1}$ and $X_{i,2}$ lie on the curve, that is, $X_{i,j}^3 + AX_{i,j} + B$ is a square, with probability 1/2. 

For any $i$, we have $\Pr(A_i) = \frac{2^{2n-2}}{2^{2n}}$. The denominator comes from the fact that given $n$ nonces we are looking at whether \emph{both} $X_{i,1}$ and $X_{i,2}$ are on $\cale_{A,B}$, and the numerator looks at all the possible combinations (as the distribution of outputs $PRG$ is indistinguishable from the uniform distribution) conditioning on both $X_{i,1}$ and $X_{i,2}$ being squares.

Similarly, for $i\neq j \neq k$, we have $\Pr(A_i \cap A_j) = \frac{2^{2n-4}}{2^{2n}}$ and $\Pr(A_i \cap A_j \cap A_k) = \frac{2^{2n-6}}{2^{2n}}$, and so forth. Hence the adversary's advantage is

\begin{align*}
\IndAdv & = \left| \frac{1}{2} - \frac{1}{2}\left(\sum_{i=1}^n {n \choose i} (-1)^{i+1} \frac{2^{2n-2i)}}{2^{2n}}\right)\right| \\
& = \left| \frac{1}{2} - \frac{1}{2}\left(\sum_{i=1}^n {n \choose i} (-1)^{i+1} \frac{1}{2^{2i}}\right)\right|, 
\end{align*}
which is negligible (and is shown in Figure~\ref{figure:advantage}).
\end{proof}

\begin{figure}[tp]
\centering
\includegraphics[width=0.9\linewidth]{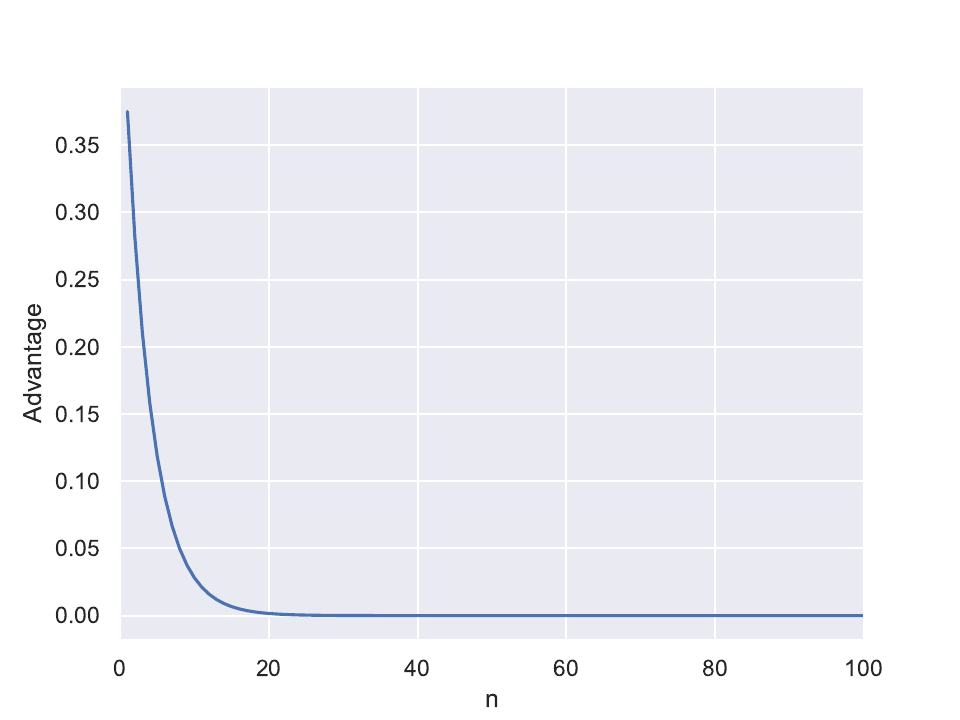}
\caption{$\IndAdv(n)$ quickly goes to 0 as $n$ increases.}
\label{figure:advantage}
\end{figure}

\end{document}